\newcolumntype{P}[1]{>{\centering\arraybackslash\hspace{0pt}}p{#1}}
\algrenewcommand\algorithmicindent{0.75em}
\newcommand{\algmargin}{\the\ALG@thistlm}
\newlength{\whilewidth}
\algnewcommand{\parState}[1]{\State%
	\parbox[t]{\dimexpr\linewidth-\algmargin}{\strut #1\strut}}
\declaretheorem[name=Theorem]{thm}
\newtheorem{ex}{Example}
\newcommand{\am}{\text{AM}}
\newcommand{\cvm}{\odot}
\newcommand{\defended}{{\sc Defended~}}
\newcommand{\existence}{{\sc Existence~}}
\newcommand{\m}{\text{M}}
\newcommand{\majority}{{\sc Majority~}}
\newcommand{\mm}{\mathcal M}
\newcommand{\Omit}[1]{}
\newcommand{\perdefended}{{\sc \%Defended~}}
\newcommand{\perundefended}{{\sc \%Undefended~}}
\newcommand{\Rho}{\mathrm{P}}
\newcommand{\undefended}{{\sc Undefended~}}
\newcommand{\uniform}{{\sc Uniform~}}
\newcommand{\vc}{\vec{c}}
\newcommand{\vd}{\vec{d}}
\newcommand{\ve}{\vec{e}}
\newcommand{\veta}{\vec{e}}
\newcommand{\vetap}{\vec{d}}
\newcommand{\vm}{\vec{m}}
\newcommand{\vpi}{\vec\pi}
\newcommand{\vr}{\vec{r}}
\newcommand{\vrho}{\vec\rho}
\newcommand{\vone}{\vec{1}}
\title[Anti-Malware Sandbox Games]{Anti-Malware Sandbox Games}
\author{Sujoy Sikdar}
\affiliation{
  \institution{Binghamton University}
  \city{Binghamton}
  \state{NY}
  \country{USA}}
\email{ssikdar@binghamton.edu}
\author{Sikai Ruan}
\affiliation{
  \institution{Rensselaer Polytechnic Institute}
  \city{Troy}
  \state{NY}
  \country{USA}}
\email{ruans2@rpi.edu}
\author{Qishen Han}
\affiliation{
  \institution{Rensselaer Polytechnic Institute}
  \city{Troy}
  \state{NY}
  \country{USA}}
\email{hanq6@rpi.edu}
\author{Paween Pitimanaaree}
\affiliation{
  \institution{SCB Securities Co. Ltd.}
  \city{London}
  \country{UK}}
\email{paween.pit@gmail.com}
\author{Jeremy Blackthorne}
\affiliation{
  \institution{Boston Cybernetics Institute}
  \city{Cambridge}
  \state{MA}
  \country{USA}}
\email{jeremy.blackthorne@gmail.com}
\author{Bulent Yener}
\affiliation{
  \institution{Rensselaer Polytechnic Institute}
  \city{Troy}
  \state{NY}
  \country{USA}}
\email{yener@cs.rpi.edu}
\author{Lirong Xia}
\affiliation{
  \institution{Rensselaer Polytechnic Institute}
  \city{Troy}
  \state{NY}
  \country{USA}}
\email{xial@cs.rpi.edu}
\begin{abstract}
We develop a game theoretic model of malware protection using the state-of-the-art sandbox method,  to characterize and compute optimal defense strategies for anti-malware. We model the strategic interaction between developers of malware (M) and anti-malware (AM) as a two player game, where AM commits to a strategy of generating sandbox environments, and M responds by choosing to either attack or hide malicious activity based on the environment it senses. We characterize the condition for AM to protect all its machines, and identify conditions under which an optimal AM strategy can be computed efficiently. For other cases, we provide a quadratically constrained quadratic program (QCQP)-based optimization framework to compute the optimal AM strategy. In addition, we identify a natural and easy to compute strategy for AM, which as we show empirically, achieves AM utility that is close to the optimal AM utility, in equilibrium.
\end{abstract}
\keywords{Anti-malware, Sandbox, Non-cooperative game theory}
\newcommand{\BibTeX}{\rm B\kern-.05em{\sc i\kern-.025em b}\kern-.08em\TeX}
\begin{document}

%%% The following commands remove the headers in your paper. For final 
%%% papers, these will be inserted during the pagination process.

\pagestyle{fancy}
\fancyhead{}

%%% The next command prints the information defined in the preamble.

\maketitle 

\section{Introduction}
Malicious programs, or malware (M) for short, are a threat to individuals, companies, and even military units and intelligence agencies. One approach to stopping malware is to scan suspected programs with an anti-malware (AM) program, which can check the suspected program against a list of known, bad programs. In response, malware resort to manipulating its own code dynamically during runtime to bypass static signature scanning. To counter this, AM programs execute suspected programs in a contained, simulated environment, called a {\em sandbox}, in an attempt to trick the malware into showing its true malicious activity. This allows the AM to check program behavior at run-time against a list of known bad behaviors. In response, malware now checks its environment to make an intelligent decision on whether to {\em attack} by unleashing its true malicious activity~\cite{Bulazel2017:Survey}. This raises the question, {\em what is the optimal way for AM to protect machines defended by it, from malware developers who will study its behavior?}

The tools of {\em game theory} are well suited to analyze the strategic interaction between malware developers (M) and anti-malware developers (AM) in our setting of {sandboxing}, where AM's strategies involve the use of sandbox environments to dynamically analyze malware. Despite the large literature and wide application of game theoretic analysis of problems in physical~\cite{Tambe2011:Security} and cyber~\cite{Roy2010:Survey,Singh2010:Malware,Do2017:Game} security, sandboxing has not been analyzed from a game-theoretic perspective.

We use computational game-theoretic methods to develop strategies and guidelines to improve the state of the art in sandbox analysis. Unfortunately, existing literature on game theoretic analysis of malware do not model the sandboxing problem, because the dynamics of the game and utility functions we study in this paper are quite different from previous work. See Section~\ref{sec:related} for more details. We address the following key research question:

\vspace{2mm} {\em How can we model sandboxing as a game and characterize and compute optimal strategies for the anti-malaware?}

\begin{figure}[htp]
	\centering
	\includegraphics[width=.9\linewidth]{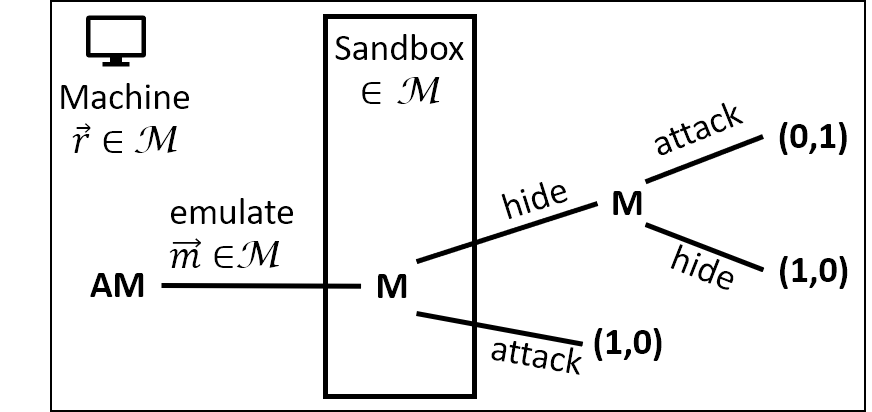}
	\caption{\label{fig:simple} High level depiction of the timing of the game.}
\end{figure}

\subsection{Our Contributions}
Our main conceptual contribution is the first game theoretic model of the strategic behavior between sandboxing AM and M. Figure~\ref{fig:simple} illustrates the timing and actions of AM and M on a real machine that AM defends. 

Suppose an AM is installed on some (not necessarily all) machines. The sandbox game has three stages. On each machine that AM defends, the AM generates a sandbox according to a strategy $\pi$ within which to analyze the potential malware. Given any machine environment $\vm$ that M is executed in, M decides an action of attacking or hiding according to her chosen strategy $\rho$. The three stages are:

\begin{enumerate}[label={\bf Stage \arabic*:},leftmargin=*]

\item AM commits to a sandboxing strategy, when it is deployed on the real machines it defends.

\item M observes AM's strategy and responds with an attack strategy $\rho$.

\item If M bypasses the sandbox in stage 2, then it will be run on the real machine, and attack according to $\rho$.

\end{enumerate}

We note that in stage 3, M uses the same attack strategy $\rho$ as it does in stage 2. This is because M's decision is based only on the environment it perceives during execution, and M cannot distinguish between whether an environment corresponds to a real machine or one that is being emulated by AM in a sandbox.

M's goal is to successfully attack as many real machines as possible. AM's goal is to {\em protect} the maximum number of machines it {\em defends} as possible. A real machine is protected if either M attacks in the sandbox and is caught, or M bypasses the sandbox but chooses not to attack the real machine.

Our main technical contributions are analytical solutions of AM-optimal strategies in equilibrium for a wide range of cases, and practical algorithms to compute such solutions otherwise. We summarize our results as a set of guidelines for sandboxing AMs in Section~\ref{sec:guidelines}. Our main results are:

\begin{itemize}[leftmargin=*]

\item When the AM defends every machine, we identify natural and easy to compute AM-optimal solutions (Theorems~\ref{thm:installednaive} and~\ref{thm:installedsophisticated}). These results are summarized in Table~\ref{tbl:contrib_everywhere}.

\item When AM defends at most half of all real machines, there is an equilibrium solution where AM protects all machines which it defends, as we show in Theorem~\ref{thm:partial}. A potential application of this finding is that the deployment setting may be modified by creating undefended dummy virtual machines that emulate real machines, which superficially increase the number of machines where AM is not installed.
	
\item We provide a quadratically constrained quadratic program (QCQP) framework in Algorithm~\ref{alg:qcqp} for computing an AM-optimal solution in equilibrium for real world settings where no analytical solution is known.

\item We show through theoretical results or experiments that natural strategies of sandboxing are effective. For example, even on cases where no analytical solution is known, generating sandbox environments using a distribution identical to the the distribution of real machines in existence yields good AM utility in practice.

\end{itemize}

\section{Related Work}\label{sec:related}

\noindent{\bf Anti-analysis.}
{\em Sandbox} is a general term referring to any simulated or contained environments, such as emulation, virtualization, or debuggers. Although these tools intend to mimic the original environment, they often do not do so perfectly. These imperfections allow programs that can detect them to distinguish between running within sandboxes from running within normal environments. This gives programs the capability of decision, specifically to act benignly while in a sandbox and act maliciously when running in the normal environment, in an attempt to distinguish the real environment from the artificial, sandbox environment. This results in a game of the malware inspecting the sandbox as the anti-malware attempts to inspect the program. Both parties are attempting inspection and classification.

Bulazel et al.~\cite{Bulazel2017:Survey} did a recent survey of automated malware analysis and evasion.
There is extensive literature cataloging the various ways to detect emulators, virtual machines, and debuggers \cite{Raffetseder2007,Branco2012}. There is also extensive work in engineering solutions that approximate transparent introspection of programs \cite{Deng:2013:SSB:2523649.2523675,balzarotti10:split_personality}.

\vspace{1em}
\noindent{\bf Formalization.}
Blackthorne et al.~\cite{Blackthorne2016} define a formal framework for environmentally sensitive malware in which they show the conditions necessary for malware to distinguish sandboxes and remain undetected. Blackthorne et al.~\cite{ENVKEY_LATINCRYPT2017} formalize malware interacting with and distinguishing environments through the lens of cryptography with the label environmental authentication. They analyze the interaction in terms of correctness, soundness, and key sources, but they do not suggest any strategies for game players to use possible equilibria. Dinaburg et al.~\cite{Dinaburg2008} present a formalization for transparent malware analysis. {\em Transparent} analysis means using an environment which is impossible to detect by the program being analyzed, i.e., the analysis environment is transparent. Kang et al.~\cite{Kang2009} also briefly formulate the problem of transparent malware analysis within emulators, but do not offer any further investigation than the beginnings of describing transparency.

\vspace{1em}
\noindent{\bf Game Theory.}
To the best of our knowledge, our work is the first to formally model the interaction between M and AM via sandboxing through the lens of game theory. Game theory has been used to model various problems in cybersecurity. (See~\cite{Singh2009:Optimal,Singh2010:Malware} and~\cite{Do2017:Game,Roy2010:Survey} for recent surveys). However, relatively little literature exists on game theoretic analysis of the interactions between the developers of malware and anti-malware.

\citeauthor{Sinha2015:Physical}~\cite{Sinha2015:Physical} discusses techniques developed to address the challenges of scale, uncertainty, and imperfect observation by the attacker in security games modeled as Stackelberg games, and applications to cybersecurity. Stackelberg games have also been used to model security games involving web applications~\cite{Vadlamudi2016:Moving}, network security~\cite{Vanvek2012:Game,Durkota2015:Game}, competition among multiple malwares~\cite{Lee2015:Host}, and audit games~\cite{Blocki2013:Audit,Blocki2015:Audit}.

Our model has a number of differences with previous work. First, the AM must commit to the same strategy on every machine of the same type. Unlike security games, when AM can commit to mixed strategies, there are an infinite number of pure strategies for even a naïve AM which uses the same strategy on every machine. Each pure strategy is represented by a vector with a component for each type of machine whose value correspond to the probability with which AM creates a sandbox of that type. This means that unfortunately, standard algorithms and techniques for solving security games do not apply to the sandbox game.

\begin{figure*}[h!]
	\centering
	\begin{gather*}
	u_{\m}(\pi,\rho) = \sum_{\vr \in \mm}\underbrace{\left( 1-\vd_{\vr}\right)\vrho_{\vr}}_\text{M attacks $|$ No AM} +\allowbreak \underbrace{\vd_{\vr}\Big[1 - \sum_{\vm \in \mm}\vpi_{\vm}^{\vr}\vrho_{\vm}\Big]\vrho_{\vr}}_\text{M evades sandbox and attacks real machine $|$ AM installed}\\
	u_{\am}(\pi,\rho) = \left.\sum_{\vr \in \mm}\underbrace{\vd_{\vr}}_\text{AM installed}\Bigg[\sum_{\vm \in \mm}\allowbreak\Big[\underbrace{\vpi^{\vr}_{\vm}\vrho_{\vm}}_\text{M caught in sandbox} +\allowbreak \underbrace{\vpi^{\vr}_{\vm}(1-\vrho_{\vm})(1-\vrho_{\vr})}_\text{M evades sandbox, then hides}\Big] + \underbrace{\Big(1 - \sum_{\vm \in \mm}\vpi^{\vr}_{\vm}\Big)(1-\vrho_{\vr})}_\text{No sandbox. M hides}\Bigg]\right.
	\end{gather*}
	\caption{\label{fig:utility}Utility functions of AM and M.}
\end{figure*}

\section{Modeling the Sandbox Game}\label{sec:prelim}
We are given a set $\mm$ of all types of environments, where each $\vm \in \mm$ is represented by a $k$-vector of features (i.e.,~emulated clock time, fingerprint, network connection, etc.) that fully and uniquely describes machines of type $\vm$. Every {\em real machine} presents some environment in $\mm$ natively. Similarly, every sandbox generated by AM presents also presents an environment in $\mm$. Table~\ref{tab: notation} summarizes the notation used throughout this paper.\\

\noindent{\bf Real World Settings} are described by $|\mm|$-vectors $\ve$ and $\vd$. The $\vr$-th component of $\ve$, denoted $\ve_{\vr}$ is the fraction of all machines in existence in the real world that are of type $\vr\in\mm$. The $\vr$-th component of $\vd$ (respectively $\vone - \vd$) is the fraction of all real machines that are of type $\vm$ and are {\em defended} (respectively {\em not defended}) by AM. In addition, we will use $D$ and $1-D$ to denote the proportion of all real machines that defended, and not defended by AM, respectively.\\

\noindent{\bf AM's Strategy Space} is the set $\Pi$ of all functions $\pi$ that map each real machine type $\vr\in \mm$ to an $|\mm|$-vector $\pi(\vr) = \vpi^{\vr}$ which describes a distribution over $\mm$. On any given real machine of type $\vr\in\mm$, the $\vm$-th component of $\vpi^{\vr}$, denoted $\vpi^{\vr}_{\vm}$, gives the probability with which AM generates a sandbox of type $\vm\in\mm$ on a real machine of type $\vr$ when AM uses strategy $\pi$. The probability that an AM using a strategy $\pi$ does not generate a sandbox on a real machine $\vr$ is $1 - \sum_{\vm\in \mm} \vpi^{\vr}_{\vm}$.\\ %\qishen{add the probability that AM does not generate a sandbox.}

\noindent{\bf Malware's Strategy Space} is the set $\Rho$ of all vectors in $[0,1]^{|\mm|}$. Each strategy $\rho \in \Rho$ is represented by an $|\mm|$-vector $\vrho$. The $\vm$-th component of $\vrho$, denoted $\vrho_{\vm}$ is the probability with which M attacks when presented with an environment $\vm\in\mm$. Note that M cannot distinguish between a sandbox and a real machine that present the same environment. M's strategy only depends on the execution environment $\vm$ presented to it, regardless of whether it is an environment presented by a real machine or by a sandbox.\\

\noindent{\bf Restrictions on strategy space.}
We say that AM is {\em na\'ive} if it generates sandboxes with the same probability distribution on every type of real machine, i.e., when AM is restricted to a strategy $\pi \in \Pi$, where for every pair of real machine types $\vr, \vr' \in \mm$,  it holds that $\vpi^{\vr} = \vpi^{\vr'}$. We say that AM is {\em deterministic} if for every machine $\vr \in \mm$, there is a sandbox environment $\vm$ which is deterministically always generated by AM in $\vr$. Otherwise,  AM is {\em non-deterministic}. We say that AM is {\em sophisticated}, if AM can choose different distributions over $\mm$ to create sandboxes on any real machine, i.e., its strategy space is all of $\Pi$. 

Similarly, M is said to be naïve, if its probability of attack does not depend on the environment it perceives, i.e., for every pair of possible environment types $\vm, \vm' \in \mm$, it holds that $\vrho_{\vm} = \vrho_{\vm'}$. M is {\em deterministic} if M is restricted to strategies $\rho$, where for each $\vm \in \mm$, $\vrho_{\vm} \in \{0,1\}$, i.e., in each environment M can either only always attack or always hide. Otherwise, M is {\em non-deterministic}. M is sophisticated when its strategy space is all of $\Rho$.

We note that in our formulation,  for AM and M, deterministic strategies and non-deterministic strategies are pure strategies. In other words, there are infinite pure strategies,  each of which assigns a number between $0$ and $1$ to each machine type, and deterministic strategies simply mean that each component is either $0$ or $1$ but not in-between.\\

\noindent {\bf Utility Function.} AM wants to protect machines it defends (but not all machines). AM's payoff (utility) is the expected fraction of real machines which it defends that are not attacked. M's payoff is the expected fraction of machines that are successfully attacked, regardless of whether the machine is defended by AM. Consequently, given mixed strategies $\pi$ and $\rho$ of AM and M respectively, we define M's utility $u_{\m}(\pi,\rho)$ and AM's utility $u_{\am}(\pi,\rho)$ as shown in Figure~\ref{fig:utility}. Notice that if M is caught in a sandbox, it is not allowed to execute on the real machine, and therefore cannot attack the real machine. Notice also that payoffs are bound between $0$ and $1$.\\

\noindent {\bf Solution Concept.} We will focus on characterizing and computing the pure strategy subgame perfect equilibrium (SPNE) of the game. This will give us the optimal strategy for AM while M can perfectly respond to AM's strategies. Computing an AM optimal SPNE solution involves solving the following optimization problem:
\begin{align*}
\max_{\pi,\rho} ~& u_\am(\pi,\rho)\\
\text{such that} ~& \rho \in \arg\max_{\hat\rho\in \Rho}u_\m(\pi,\hat\rho)
\end{align*}
Unfortunately, this involves solving multiple non-convex quadratically constrained quadratic programs which are generally NP-hard. To deal with this, we identify several tractable cases, and suggest practical algorithms for computing the solution otherwise.\\

\begin{table*}[htp!]
\centering
\begin{tabular}{ll}
\hline
Notation           & Meaning \\ \hline
$\mm$              & The set of all possible types of environments. \\
$\vm\in\mm$   & An environment in $\mm$, which may be presented by either a real machine or a sandbox. \\
$\vr\in\mm$   & An environment type in $\mm$ presented by a real machine. \\
$\ve$              & $|\mm|$-vector where $\ve_{\vm}$ denote the fraction of real machines of type $\vm\in\mm$. \\
$\vd$              & $|\mm|$-vector where $\vd_{\vm}$ denotes the fraction of real machines of type $\vm\in\mm$ that are defended by AM. \\
$D$                & Fraction of all real machines in existence that are defended AM. \\
$\vpi$             & AM's strategy: maps each possible real machine environment $\vr$ to a distribution over $\mm$ of generating sandboxes. \\
$\vpi^{\vr}_{\vm}$ & The probability that AM generates a sandbox of type $\vm$ given a real machine of type $\vr$. \\
$\vrho$            & M's strategy: An $|\mm|$-vector where $\vrho_{\vm}$ denotes the probability that M attacks when presented an environment $\vm$. \\
$u_\am, u_\m$      & Utility functions of AM and M. \\ \hline
\end{tabular}
\caption{A summary of notation used frequently throughout this paper.\label{tab: notation}}
\end{table*}

\noindent{\bf Natural Strategies.} We will evaluate the AM-optimal strategies against some natural strategies as described in Table~\ref{tbl:natural}. For example, in the \existence strategy, AM generates a sandbox of type $\vm$ with probability $\vpi_{\vm} = \veta_{\vm}$, i.e., the fraction of type $\vm$ machines. %\qishen{add a brief example here}
\begin{table}[h]
	\centering
	\small
	\begin{tabular}[width=\linewidth]{|c|c|c|}\hline
		Name & $\vpi_{\vm}$ & Distribution\\ \hline
		\existence  & $\veta_{\vm}$ & Real machines\\
		\defended & $\frac{\vetap_{\vm}}{D}$ & Defended machines\\
		\undefended & $\frac{1 - \vetap_{\vm}}{1-D}$ & Undefended machines\\
		\perdefended & $\frac{\vetap_{\vm}}{\veta_{\vm}}$ & $\propto$ \% Defended\\
		\perundefended & $\frac{1-\vetap_{\vm}}{\veta_{\vm}}$ & $\propto$ \% Undefended\\
		\majority & $\vpi_{\arg\max\veta_{\vm}} = 1$ & Majority\\
		\uniform & $\frac{1}{|\mm|}$ & Uniform\\ \hline
	\end{tabular}
	\caption{\label{tbl:natural}Natural strategies for AM.}
\end{table}

\section{SPNE When AM Defends Every Machine}\label{sec:everywhere}
\begin{table*}[htp]
	\centering
	\small
	\begin{tabular}{|c|c|c|c|c|c|} \hline
		\multicolumn{2}{|c|}{Strategy space} & \multirow{ 2}{*}{Players} & \multicolumn{2}{c|}{Equilibrium Strategy} & \multirow{ 2}{*}{Utility} \\ \cline{1-2}\cline{4-5}
		AM & M& & Machine $A$ ($\vec e_{\vec a} = 0.4$) & Machine $B$  ($\vec e_{\vec b} = 0.6$) &  \\ \hline
		\multirow{ 2}{*}{Naïve deterministic}& \multirow{ 2}{*}{deterministic} & AM  & 0 & 1 & 0.6 \\
		&&M  & 1 & 0 & 0.4 \\ \hline
		\multirow{ 2}{*}{Naïve}& \multirow{ 2}{*}{deterministic} &  AM  & 0.4 & 0.6 & 0.76 \\
		&&M  & 1 & 0 & 0.24 \\ \hline
		\multirow{ 2}{*}{Naïve}& \multirow{ 2}{*}{Sophisticated} &  AM & 0.4 & 0.6 & 0.75 \\
		&&M  & 0.5 & 0.5 & 0.25 \\ \hline
		\multirow{ 2}{*}{Sophisticated~deterministic}& \multirow{ 2}{*}{Sophisticated} &  AM  & Emulate A & Emulate B & 0.75 \\
		&& M  & 0.5 & 0.5 & 0.25 \\ \hline
	\end{tabular}
	\caption{\label{tbl:simple} Example of the equilibrium strategies and payoffs for AM and M for two machines $A$ and $B$ with $\vec e_{\vec a}= 0.4$ and $\vec e_{\vec b} = 0.6$.}
\end{table*}

We first consider the case where AM defends all machines. In this case $\vd_{\vm} = \ve_{\vm}$ for all $\m\in \mm$. We will prove that \existence is the optimal strategy to generate sandboxes when AM defends all real machines when AM is naïve in Theorem~\ref{thm:installednaive}. This involves randomly generating sandboxes according to the distribution $\ve$, i.e., for each real machine $\vr \in R$, and possible environment $\vm \in \mm$, $\vpi^{\vr}_{\vm} = \ve_{\vm}$. When AM can commit to a sophisticated strategy, the solution is even more simple and intuitive: {\em deterministically create a sandbox of the same type as the real machine being defended.} In equilibrium, AM is guaranteed a utility of $0.75$, as we prove in Theorem~\ref{thm:installedsophisticated}.

Note that in the special case where M is naïve, AM's optimal strategy is any strategy that always creates a sandbox on any real machine. This is because a naïve M cannot distinguish between different types of machines and the game degenerates into a single-type model. The example below shows different cases where AM and a non-naïve M are restricted on different strategy spaces. %\qishen{add naïve M here}
\begin{ex}\rm
	\label{ex:simple}
	The simple example of Table~\ref{tbl:simple} illustrates the following natural message about the impact of modeling choices of the game on the payoff of AM. {\bf In general, allowing AM more flexibility, or placing more restrictions on M leads to a higher payoff for AM in equilibrium.} At a high level, these roughly correspond to giving AM more resources and constraining the resources of M.%\lirong{please double check if the notation is correct in the table, the $\vec e$ notation.}
	
	Table~\ref{tbl:simple} illustrates an example of the various equilibrium strategies and payoffs for a setting with two machines $A$ and $B$ with $\vec e_{\vec a}= 0.4$ and $\vec e_{\vec b} = 0.6$. Specifically, the {\em Equilibrium Strategy} column shows the strategies, where its first column is $\vpi_{\vec a}$ for AM and $\vrho_{\vec a}$ for M; and its second column is $\vpi_{\vec b}$ for AM and $\vrho_{\vec b}$ for M. The exception is the fourth row where AM takes a sophisticated deterministic strategy. Here AM's strategy is to create a sandbox that emulates the real machine it is defending, by creating a sandbox of type $A$ for all real machines $A$ and a sandbox of type $B$ for all real machines $B$.% \qishen{add a instruction to Table~\ref{tbl:simple}}
	
	When AM uses non-deterministic sandboxing strategies to fight against a deterministic-strategy-only M, AM's utility increases from $0.6$ to $0.76$ (first row vs.~second row in Table~\ref{tbl:simple}). If M is also allowed to be non-deterministic, then AM's payoff in equilibrium reduces slightly to $0.75$ (third row in Table~\ref{tbl:simple}). As another example, allowing AM to use sophisticated (but still deterministic) strategies to fight against M increases the utility from $0.6$ to $0.75$ (first row vs.~fourth row in Table~\ref{tbl:simple}). 
	
	We consider the setting with two machines $A$ and $B$ represented by feature vectors $\vec a$ and $\vec b$ respectively. We start by restricting M's ability to be deterministic only. 
	
	{\em Naïve Deterministic vs.~Naïve Nondeterministic AM} (first row vs. second row in Table~\ref{tbl:simple}).
	Allowing the flexibility of randomized sandbox generation increases AM's utility to 0.76 from 0.6 when AM is restricted to deterministic generation. Suppose AM assigns probability $\pi\in [0,1]$ to emulate $\vec a$ and $1-\pi$ to emulate $\vec b$. M's payoff becomes: $$\max\underbrace{(0.4\times(1-\pi)}_{\text{Always attack }\vec a}, \underbrace{0.6\times \pi)}_{\text{Always attack }\vec b}$$ 
	
	Note that when $M$'s strategy is to always attack $u_M=0$ since AM always creates a sandbox, and that when $M$ always does not attack $u_M=0$ again. It is easy to see that $u_M$ is minimized when $\pi=0.4$, i.e., AM mimics the distribution of real machines.
	% Therefore, AM's optimal strategy is to emulate $\vec a$ with $40\%$ probability and emulate $\vec b$ with $60\%$ probability, and 
	M is indifferent between always attacking $\vec a$ or always attacking $\vec b$, which results in a payoff of $0.76$ to AM.
	
	{\bf Remark}: When AM is Naïve and deterministic and M is sophisticated, the best strategy for AM is \majority. Note that in this case AM's strategy is to pick a single type $\vm$ and create a sandbox of type $\vm$ on all real machines. Then, M's best response is attacking $\vm$ with probability 0.5, and always attacking other types. Therefore, $u_M = 1-0.75\ve_{\vm}$ and $u_{AM} = 0.75\ve_{\vm}$. Therefore, AM's utility is maximized when it picks the type with the highest proportion to emulate as a sandbox environment, which is exactly the strategy \majority.% \qishen{add majority here}
	
	{\em Sophisticated and non-deterministic M} 
	(third row vs. second row in Table~\ref{tbl:simple}). Suppose AM assigns probability  $\pi \in [0,1]$ to emulate $\vec a$ and $1-\pi$ to emulate $\vec b$, and suppose M chooses to attack with probability $\rho_A$ (respectively, $\rho_B$) in environment $\vec a$  (respectively, $\vec b$). M's payoff becomes 
	\begin{align}\label{eq:basicpayoff}
	\begin{split}
	\underbrace{0.4\times (\pi(1-\rho_A)+(1-x)(1-\rho_B))\times \rho_A}_\text{M succeeds on machine A}+\\ \allowbreak\underbrace{0.6\times (\pi(1-\rho_A)+(1-\pi)(1-\rho_B))\times \rho_B}_\text{M succeeds on machine B}.
	\end{split}\end{align}
	
	While this formula seems hard to solve analytically, observe that no matter what $\pi$ is, M's payoff is always $0.25$ when it chooses $\rho_A=\rho_B=0.5$. Also notice that when $\pi=0.4$, Equation~(\ref{eq:basicpayoff}) becomes $(0.4 \rho_A+0.6\rho_B)(0.4 (1-\rho_A)+0.6(1-\rho_B))$, which is maximized at $\rho_A=\rho_B=0.5$ giving the equilibrium shown in Table~\ref{tbl:simple}.

	\vspace{2mm}\noindent{\em Sophisticated deterministic AM vs.~Nondeterministic M.} This is the last row in Table~\ref{tbl:simple}. Similarly to the na\'ive mixed vs.~mixed case, M can choose $\rho_A=\rho_B=0.5$ to guarantee a payoff of $0.25$. AM's optimal strategy now becomes always  emulating the machine being defended. \hfill$\blacksquare$
\end{ex}

Theorems~\ref{thm:installednaive} and~\ref{thm:installedsophisticated} follow from the observation that (1) when AM defends all machines, we are faced with a zero-sum game, and (2) identifying an equilibrium under which AM achieves the highest possible utility under any equilibrium. 
%\qishen{Write a better summary to the current proof} 

\begin{thm}\label{thm:installednaive} When AM is installed on every real machine, and AM is naïve, and can commit to mixed strategies, the natural strategy \existence, is optimal in equilibrium.
\end{thm}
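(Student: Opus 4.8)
The plan is to exploit the constant-sum structure that arises once AM defends every machine, reducing M's best-response problem to a one-variable quadratic. First I would substitute $\vd_\vr = \ve_\vr$ and the naïve condition $\vpi^\vr = \vpi$ into the utility functions of Figure~\ref{fig:utility}. The ``No AM'' term in $u_\m$ vanishes, and after collapsing the inner sum over $\vm$ (writing $P = \sum_\vm \vpi_\vm$ and $S = \sum_\vm \vpi_\vm \vrho_\vm$, and using $\sum_\vm \vpi_\vm(1-\vrho_\vm) = P - S$ together with the $(1-P)$ no-sandbox term) the bracketed per-machine term of $u_\am$ should telescope to $1 - \vrho_\vr(1-S)$. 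Summing against $\ve_\vr$ then gives $u_\am(\pi,\rho) = 1 - (1-S)\,T$ and, symmetrically, $u_\m(\pi,\rho) = (1-S)\,T$, where $T = \sum_\vm \ve_\vm \vrho_\vm$. The structural fact to record is $u_\am + u_\m = 1$: the game is constant-sum, so AM maximizing its utility is equivalent to AM minimizing $\max_\rho u_\m$.

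Next I would compute the value achieved by \existence. Setting $\vpi_\vm = \ve_\vm$ forces $S = T$, so M's best-response problem becomes $\max_\rho (1-T)\,T$ over $T \in [0,1]$, a scalar quadratic maximized at $T = \tfrac12$ with value $\tfrac14$. M can realize $T = \tfrac12$ (e.g. by attacking with probability $\tfrac12$ in every environment), and since $(1-T)T \le \tfrac14$ identically, no response exceeds $\tfrac14$. Hence against \existence every best response of M yields $u_\m = \tfrac14$ and $u_\am = \tfrac34$, and this pair is an SPNE.

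To prove optimality I would show $\tfrac34$ is the most AM can guarantee under any naïve mixed strategy. The crucial observation is that the response $\vrho \equiv \tfrac12$ is available to M regardless of $\pi$; plugging it in gives $S = \tfrac12\sum_\vm \vpi_\vm \le \tfrac12$ and $T = \tfrac12$, so $u_\m \ge (1-\tfrac12)\tfrac12 = \tfrac14$ for every $\pi$. Thus $\max_\rho u_\m(\pi,\rho) \ge \tfrac14$ for all $\pi$, giving $u_\am \le \tfrac34$ against any AM commitment by the constant-sum identity. Combined with the preceding paragraph, \existence attains this bound, so it is AM-optimal in equilibrium.

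The main obstacle I anticipate is purely algebraic: correctly collapsing the three grouped terms of $u_\am$ (caught in sandbox, evades-then-hides, no-sandbox-then-hides) into the clean form $1 - \vrho_\vr(1-S)$ and confirming $u_\am + u_\m = 1$ without sign errors, since this reduction is what makes both the scalar optimization and the minimax bound go through. Everything afterward is a one-variable quadratic estimate. A subtle point worth flagging is that the argument should not assume M is naïve: the inequality $(1-T)T \le \tfrac14$ holds for all $\vrho$, so \existence caps M at $\tfrac14$ even against a sophisticated M, while the guaranteeing move $\vrho \equiv \tfrac12$ shows no AM strategy does better.
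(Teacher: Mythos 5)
Your proposal is correct and follows essentially the same route as the paper's proof: reduce $u_\m$ to the product form $T(1-S)$ with $u_\am = 1 - u_\m$, use the response $\vrho \equiv \tfrac12$ to show M guarantees $\tfrac14$ against any naïve $\pi$ (hence $u_\am \le \tfrac34$), and verify that \existence attains $u_\am = \tfrac34$. The only (minor, and arguably cleaner) difference is that you cap M's utility under \existence via the identity $(1-T)T \le \tfrac14$ for all $\rho$, where the paper instead checks the first-order conditions $\partial u_\m/\partial \vrho_{\vr} = \ve_{\vr}(1 - 2\sum_{\vm}\ve_{\vm}\vrho_{\vm}) = 0$ at $\vrho \equiv \tfrac12$.
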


\begin{proof}
When AM is naïve, it takes the same distribution for all types of machine $\vr \in \mm$. Therefore, we use $\vpi$ to denote AM's strategy. Then we can rewrite the utility of M  $u_M(\pi, \rho)$ as the following formula: $u_M(\pi,\rho) = (\sum_{\vr\in \mm} \ve_{\vr} \vrho_{\vr}) ( 1-\sum_{\vm\in \mm} \vpi_{\vm} \vrho_{\vm})$. Note that since all machines are defended, we have $\vd_{\vr} = \ve_{\vr}$ for all $\vr\in\mm$. We also have $u_{AM}(\pi, \rho) = 1-u_M(\pi, \rho)$.

We begin by noting that for {\em any} AM strategy $\pi$, $M$ can guarantee a utility of at least $0.25$ when $M$ uses a naïve strategy of attacking with probability $0.5$, because $\sum_{\vr\in\mm} \ve_{\vr} =1$,  $\sum_{\vm\in\mm} \vpi_{\vm} \le1$ and $M$'s utility is
%note that when M uses a naïve strategy of attacking with probability $0.5$, M's utility
$$
u_M(\pi,\rho) = (0.5\sum_{\vr\in \mm} \ve_{\vr}) ( 1-0.5\sum_{\vm\in \mm} \vpi_{\vm}) \ge 0.5\times(1-0.5) =0.25.
$$
Therefore, for any choice of $\pi$ for AM, if M selects a best response, $u_{AM}(\pi, \rho) \le 0.75$.

Now, consider the case where AM adopts the strategy \existence $\pi^*$ where $\vpi^*_{\vm} = \ve_{\vm}$ for every $\vr \in \mm$. Then $u_M(\pi,\rho) = \big(\sum_{\vec r \in \mm}\ve_{\vr}\vrho_{\vr}\big)\big(1-\sum_{\vm \in \mm}\ve_{\vm}\vrho_{\vm}\big)$.
In order to maximize $u_M$, we consider its derivative:
$$
\frac{\partial u_M}{\partial \vrho_{\vr}} =  \ve_{\vr} \big(1-2\sum_{\vm \in \mm}\ve_{\vm}\vrho_{\vm}\big)
$$
Note that when $\vrho_{\vm} =0.5$ for all $\vm\in\mm$ (M attacks every type with probability 0.5), $\frac{\partial u_M}{\partial \vrho_{\vr}} =0$ for every $\vm\in\mm$. and $u_M(\vpi, \vrho)$ is maximized. Then we have $u_M=0.25$ and $u_{AM}=0.75$ which is the highest utility that AM can obtain in equilibrium. Therefore, \existence is an AM-optimal strategy in equilibrium. 
\end{proof}

\begin{thm}\label{thm:installedsophisticated} When AM is installed on every real machine, and
AM is sophisticated, an AM optimal strategy in equilibrium is to emulate the current machine, i.e., $\pi(\vec r,\vec r) = 1$ for every real machine $\vec r$.
\end{thm}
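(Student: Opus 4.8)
The plan is to mirror the two-part argument used for Theorem~\ref{thm:installednaive}: first exploit that the game is zero-sum when every machine is defended, so that bounding $u_\am$ from above over all equilibria is equivalent to bounding from below the value M can secure against an arbitrary $\pi$; then exhibit the emulate-the-current-machine strategy and check that M's best response against it pins $u_\am$ at exactly that bound.

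First I would specialize the utility functions of Figure~\ref{fig:utility} to $\vd_{\vr}=\ve_{\vr}$ (all machines defended) and a sophisticated $\pi$. The undefended contribution drops out, and after collecting terms the payoffs collapse to
\[
u_\m(\pi,\rho)=\sum_{\vr\in\mm}\ve_{\vr}\,\vrho_{\vr}\Big(1-\sum_{\vm\in\mm}\vpi^{\vr}_{\vm}\vrho_{\vm}\Big),\qquad u_\am(\pi,\rho)=1-u_\m(\pi,\rho),
\]
using $\sum_{\vr\in\mm}\ve_{\vr}=1$. Thus the game is zero-sum and maximizing $u_\am$ is the same as minimizing the value available to M.

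Next comes the upper bound, generalizing the naïve-AM bound to the sophisticated case. For an arbitrary sophisticated $\pi$, let M play $\vrho_{\vm}=\tfrac12$ for every $\vm\in\mm$. Since $\sum_{\vm\in\mm}\vpi^{\vr}_{\vm}\le 1$ for each $\vr$, the inner factor satisfies $1-\tfrac12\sum_{\vm\in\mm}\vpi^{\vr}_{\vm}\ge\tfrac12$, and therefore $u_\m\ge\tfrac14\sum_{\vr\in\mm}\ve_{\vr}=\tfrac14$. Hence against a best-responding M we always have $u_\am\le\tfrac34$, no matter which sophisticated strategy AM commits to. For achievability I would substitute $\pi^*$ with $\vpi^{\vr}_{\vr}=1$ (and zero elsewhere): the inner sum becomes simply $\vrho_{\vr}$, so $u_\m=\sum_{\vr\in\mm}\ve_{\vr}\,\vrho_{\vr}(1-\vrho_{\vr})$. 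This decouples across machine types; each term is maximized independently at $\vrho_{\vr}=\tfrac12$, so M's best response gives $u_\m=\tfrac14$ and $u_\am=\tfrac34$. Because $\tfrac34$ is the most AM can attain in any equilibrium, $\pi^*$ is AM-optimal, and $(\pi^*,\rho^*)$ with $\vrho^*_{\vm}\equiv\tfrac12$ is the claimed SPNE.

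The conceptual crux is the zero-sum reduction together with the observation that M can always guarantee $\tfrac14$; once those are in place, the rest is routine. The step I expect to require the most care is the decoupling under emulation: verifying that when AM deterministically reproduces the current environment, the cross-machine interaction terms vanish so that M's optimization separates type-by-type, which is precisely what forces the best response $\vrho_{\vr}=\tfrac12$ and certifies optimality against the $\tfrac34$ ceiling. A minor point to note is tie-breaking, but since the maximizer $\vrho_{\vr}=\tfrac12$ is unique on every type with $\ve_{\vr}>0$, AM's utility is pinned at $\tfrac34$ regardless of how M breaks ties on zero-mass types.
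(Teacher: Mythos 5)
Your proposal is correct and follows essentially the same route as the paper's proof: establish the $u_\am\le 0.75$ ceiling by noting that M can always secure $0.25$ with the na\"ive attack-with-probability-$\tfrac12$ strategy, then verify that under $\vpi^{\vr}_{\vr}=1$ the payoff $\sum_{\vr}\ve_{\vr}\vrho_{\vr}(1-\vrho_{\vr})$ is maximized type-by-type at $\vrho_{\vr}=\tfrac12$, attaining the ceiling. Your explicit remarks on the zero-sum reduction, the decoupling across types, and tie-breaking on zero-mass types are slightly more detailed than the paper's write-up but do not change the argument.
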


\begin{proof}
    
When AM is sophisticated, 
$u_M(\pi,\rho) = \allowbreak \sum_{\vec r \in \mm}\ve_{\vr}\vrho_{\vr} \big(1-\sum_{\vm \in \mm}\vpi^{\vr}_{\vm}\vrho_{\vm}\big)$. First, we note again that if M uses a naïve strategy of attacking with probability $0.5$, we have 
$$
u_M(\pi,\rho) =0.5\sum_{\vec r \in \mm}\ve_{\vr} \big(1-0.5\sum_{\vm \in \mm}\vpi^{\vr}_{\vm}\big) \ge 0.25\sum_{\vec r \in \mm}\ve_{\vr} = 0.25.
$$
Therefore, M's utility is at least $0.25$ for any AM strategy, and $u_{AM}(\pi, \rho)\le 0.75$ for any $\pi$ whenever M takes a best response. 

Then we only need to show that M attacking with probability 0.5 is the best response when AM  set $\pi_{\vr}^{\vr} = 1$ for every real machine $\vec r$. Note that in this case, M's utility is
 $u_M(\pi,\rho) = \sum_{\vec r \in \mm}\ve_{\vr}\vrho_{\vr} \big(1-\vrho_{\vr}\big)$. We can easily find that this is maximized at $\vrho_{\vr}=0.5$ for every $\vr\in\mm$, and $u_M=0.25$. 
Therefore, $u_{AM} = 0.75$  when AM set $\vpi_{\vr}^{\vr} = 1$, which is the maximum AM can achieve. Therefore, we conclude that  $\vpi_{\vr}^{\vr} = 1$ is an optimal strategy for AM in equilibrium.
\end{proof}

\section{Computing SPNE When There Are Undefended Machines}
\label{sec:partial}
In this section, we consider the case where AM defends at most half the real machines, there is an AM-optimal strategy where every machine that AM defends is protected in equilibrium, and \undefended is one such strategy.

\begin{thm}\label{thm:partial}
	When at most half of all machines are defended ($D\le \frac12$), there exists an equilibrium solution where AM protects every defended machine (AM utility is $D$), and M always attacks. Specifically, \undefended is an AM-optimal strategy in equilibrium, for which M's best response is to always attack.
\end{thm}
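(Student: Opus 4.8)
The plan is to show that the strategy \undefended attains the largest AM utility achievable in \emph{any} profile, namely $D$, so it must be AM-optimal in equilibrium. First I would record the trivial upper bound $u_{\am}(\pi,\rho)\le D$: AM's utility (Figure~\ref{fig:utility}) has the form $\sum_{\vr}\vd_{\vr}\cdot(\text{bracket})$ with each bracket a probability in $[0,1]$, and $\sum_{\vr}\vd_{\vr}=D$, so no profile can protect more than the $D$-fraction of defended machines. It therefore suffices to exhibit a single equilibrium reaching $D$, and \undefended is the candidate.

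Next I would pin down M's best response to \undefended, the na\"ive strategy $\vpi_{\vm}=(\ve_{\vm}-\vd_{\vm})/(1-D)$ (here $\ve_{\vm}-\vd_{\vm}\ge 0$ is the undefended mass of type $\vm$). Note $\sum_{\vm}\vpi_{\vm}=1$, so AM always deploys a sandbox. Substituting into M's utility and combining the linear pieces gives
\[u_{\m}(\pi,\rho)=\sum_{\vr}\ve_{\vr}\vrho_{\vr}-\Big(\sum_{\vm}\vpi_{\vm}\vrho_{\vm}\Big)\Big(\sum_{\vr}\vd_{\vr}\vrho_{\vr}\Big).\]
This is an \emph{indefinite} quadratic on the box $[0,1]^{|\mm|}$, so I cannot appeal to concavity; instead I would prove it is coordinatewise non-decreasing. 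Differentiating gives $\partial u_{\m}/\partial\vrho_{\vr}=\ve_{\vr}-\vpi_{\vr}\sum_{\vm}\vd_{\vm}\vrho_{\vm}-\vd_{\vr}\sum_{\vm}\vpi_{\vm}\vrho_{\vm}$, and the crude bounds $\sum_{\vm}\vd_{\vm}\vrho_{\vm}\le D$ and $\sum_{\vm}\vpi_{\vm}\vrho_{\vm}\le 1$, together with $\vpi_{\vr}=(\ve_{\vr}-\vd_{\vr})/(1-D)$, yield
\[\frac{\partial u_{\m}}{\partial\vrho_{\vr}}\ge\ve_{\vr}-\frac{(\ve_{\vr}-\vd_{\vr})D}{1-D}-\vd_{\vr}=(\ve_{\vr}-\vd_{\vr})\,\frac{1-2D}{1-D}\ge 0,\]
where the final inequality is exactly where $D\le\tfrac12$ enters. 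Since every partial derivative is nonnegative throughout the box, $u_{\m}$ is maximized at the corner $\vrho=\vone$, so always attacking is a best response for M.

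Finally I would close the loop. Under the profile (\undefended, $\vrho=\vone$), AM always generates a sandbox and M always attacks, so M is caught in the sandbox on every defended machine and never reaches the real machine; hence every defended machine is protected and $u_{\am}=\sum_{\vr}\vd_{\vr}=D$. This meets the upper bound, so the profile is an equilibrium (M best-responds, and AM cannot exceed $D$), and \undefended is AM-optimal in equilibrium. The one genuine obstacle is the best-response step: because $u_{\m}$ is an indefinite quadratic, the all-attack corner is not obviously optimal, and the argument hinges on the coefficient $\tfrac{1-2D}{1-D}$ remaining nonnegative precisely when $D\le\tfrac12$ — the very threshold in the theorem statement. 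The remaining steps are routine bookkeeping with the utility formulas of Figure~\ref{fig:utility}.
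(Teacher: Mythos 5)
Your proof is correct and follows essentially the same route as the paper's: both establish the trivial cap $u_{\am}\le D$, show that always attacking is a best response of M to \undefended precisely because $D\le\frac12$, and conclude that the resulting profile attains the cap. The only difference is in the best-response step, where the paper decomposes $u_\m$ into a linear term with nonnegative coefficients $\ve_{\vm}-\vd_{\vm}-D\vpi_{\vm}$ plus a nonpositive term that vanishes at $\vrho=\vone$, while you verify coordinatewise monotonicity of $u_\m$ over the box via the partial-derivative bound $(\ve_{\vr}-\vd_{\vr})\frac{1-2D}{1-D}\ge 0$ --- the same threshold, packaged as a monotonicity argument.
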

\begin{proof}
	We start by deriving the utility M receives if it always attacks. We start by noting that given AM's naïve strategy $\pi$, M evades $D(1-\sum_{\vm \in \mm}\vpi_{\vm}\vrho_{\vm})$ sandboxes in expectation. However, M does not derive utility from machines on which it bypasses the sandbox but does not attack on the real machine which occurs with expectation $(1-\sum_{\vm \in \mm}\vpi_{\vm}\vrho_{\vm})\sum_{\vr \in \mm}\vd_{\vr}(1-\vrho_{\vr})$. Thus, we rewrite
	M's utility as: 
	 $u_\m = \sum_{\vm \in \mm}\left(1-\vd_{\vm}\right)\vrho_{\vm} +\allowbreak D(1-\sum_{\vm \in \mm}\vpi_{\vm}\vrho_{\vm}) \allowbreak - (1-\sum_{\vm \in \mm}\vpi_{\vm}\vrho_{\vm})\sum_{\vr \in \mm}\vd_{\vr}(1-\vrho_{\vr}) = \sum_{\vm \in \mm}\vrho_{\vm}(1 - \vd_{\vm} - D\vpi_{\vm}) + D - (1-\sum_{\vm \in \mm}\vpi_{\vm}\vrho_{\vm})\allowbreak\sum_{\vr \in \mm}\vd_{\vr}(1-\vrho_{\vr}).   $

	Note that whenever $D\vpi_{\vm} \le \left(1-\vd_{\vm}\right)$ for every $\vm \in \mm$, we have that $u_M$ is maximized when $\vrho_{\vm} = 1$ for every $\vm \in \mm$. This is because $\vrho_{\vm} \in [0,1]$ for every $\vm \in \mm$, and $D\vpi_{\vm} - (1-\vd_{\vm}) \le 0$ for every $\vm \in \mm$. Notice that when M always attacks, AM gets utility $D$ when it always creates a sandbox, which is the maximum possible utility AM can get, and therefore optimal.
	
	It is easy to see that when AM uses the natural strategy \undefended, i.e., setting $\vpi_{\vm} =\frac{1-\vd_{\vm}}{1-D}$, the condition $D\vpi_{\vm}\le \left(1-\vd_{\vm}\right)$ is satisfied for every $\vm$, and that AM creates a sandbox on every defended real machine, giving AM a utility of $D$ in equilibrium. Therefore, \undefended is an optimal strategy.% \qishen{rewrite the last paragraph}
\end{proof}

\begin{ex}\rm\label{ex:undefended}
	We now consider a world with three types of machines A, B, and C, with features $\vec a,\vec b,$ and $\vec c$, respectively. $10\%$ of real machines are of type $\vec a$, $20\%$ of type $\vec b$, and the rest are of type $\vec c$. AM is installed on $70\%$ of machines of type $\vec a$ and type $\vec b$, and $30\%$ machines of type $\vec c$. 
	
	When AM's strategy is \undefended, M's best response is to always attack, allowing AM to protect 100\% of machines it defends. In comparison, AM's utility decreases to protecting only $92.2\%$ of machines it defends, if it uses the \existence strategy. This is in sharp contrast to our findings for AM optimal solutions when AM defends all machines that we observed in Section~\ref{sec:everywhere}.
	
	Here, we give a detailed analysis of M's utility when AM uses the strategy \existence. We list the relevant values in Table~\ref{tbl:ex2}. 
	
	\begin{table}[htbp]
\begin{tabular}{|l|l|l|l|l|}
\hline
Type $\vr$ & $\ve_{\vr} = \vpi_{\vr}$ & $\vd_{\vr}$ & $1 -\vd_{\vr}$ & $D\vpi_{\vr}$ \\ \hline
A          & 0.1                     & 0.07        & 0.03        & 0.42          \\ \hline
B          & 0.2                     & 0.14        & 0.06        & 0.84          \\ \hline
C          & 0.7                     & 0.49        & 0.49        & 0.294         \\ \hline
Total      & 1                       & $D = 0.42$  & $1-D=0.58$    &               \\ \hline
\end{tabular}
    \caption{\label{tbl:ex2} Values for computing $u_M$ in Example~\ref{ex:undefended} when AM's strategy is \existence.}
    \end{table}
    
    When AM uses \existence, $\vpi_{\vr} = \ve_{\vr}$ for every $\vr\in\mm$. Note that for A and B, we have $1-\vd_{\vec a}<D\vpi_{\vec a}$ and $1-\vd_{\vec b}<D\vpi_{\vec b}$. On the other hand,  we have $1-\vd_{\vec c}>D\vpi_{\vec c}$.  As we show below, in the best response strategy $\vrho$ for M, $\vrho_{\vec a}$ and $\vrho_{\vec b}$ do not equal to 1, while $\vrho_{\vec c} =1 $. 
    
    We first compute the derivative of $u_M$ based on the formula used in the proof of Theorem~\ref{thm:partial}.
    \begin{align*}
        \frac{\partial u_M}{\partial \vrho_{\vm}} = & \left(1-\vd_{\vm} - D\vpi_{\vm}\right) + \vd_{\vm}(1-\sum_{\vr\in\mm}\vpi_{\vr}\vrho_{\vr}) + \vpi_{\vm}\sum_{\vr\in\mm}\vd_{\vr}(1-\vrho_{\vr})\\
        = & \ve_{\vm} - \vd_{\vm}\sum_{\vr\in\mm}\vpi_{\vr}\vrho_{\vr} - \vpi_{\vm}\sum_{\vr\in\mm} \vd_{\vr}\vrho_{\vr}. 
    \end{align*}
    Substituting the values in Table~\ref{tbl:ex2}, we get:
    \begin{align*}
        \frac{\partial u_M}{\partial \vrho_{\vec a}} = & 0.1 - 0.014\vrho_{\vec a} - 0.028\vrho_{\vec b} - 0.07\vrho_{\vec c}\\
        \frac{\partial u_M}{\partial \vrho_{\vec b}} = & 0.2 - 0.028\vrho_{\vec a} - 0.056\vrho_{\vec b} - 0.14\vrho_{\vec c}\\
        \frac{\partial u_M}{\partial \vrho_{\vec c}} = & 0.7 - 0.07\vrho_{\vec a} - 0.14\vrho_{\vec b} - 0.294\vrho_{\vec c}
    \end{align*}
    Note that $\frac{\partial u_M}{\partial \vrho_{\vec c}}$ is always positive. Therefore $\vrho_{\vec c} =1$, and M always attacks on $\vec c$. On the other hand, $\frac{\partial u_M}{\partial \vrho_{\vec a}}$ and $\frac{\partial u_M}{\partial \vrho_{\vec b}}$ equals to zero simultaneously when $ 0.014\vrho_{\vec a} + 0.028\vrho_{\vec b} =0.03$. (Note that $\frac{\partial u_M}{\partial \vrho_{\vec b}}= 2\frac{\partial u_M}{\partial \vrho_{\vec a}}$). Therefore, $\vrho_{\vec a} = \vrho_{\vec b} =\frac57$, and $(\frac57, \frac57, 1)$ is an optimal strategy for M.
    
    Next, we consider the fraction of defended machines AM protects. Note that due to M's optimal strategy, M can only bypass the sandbox when AM generates either a type A or B sandbox. This means M can bypass the sandbox with probability $\frac27$. Then, the probability that a defended machine is attacked by M is:
    $$
    0.3\times\frac27\times (0.3\times\frac57 + 0.7\times 1) \approx 0.078
    $$
    Therefore, when AM adopts the \existence strategy, it successfully protects only 92.2\% of machines it defends. 
    \hfill$\blacksquare$
\end{ex}

\noindent{\bf Impact.} Theorem~\ref{thm:partial} raises interesting possibilities. One such possibility is to cheaply create sufficiently many virtual or dummy machines to manipulate the setting so that $D \le \frac12$, allowing us to trade off some computational resources for complete protection. At a high level, the idea is similar to honeypots in other cybersecurity problems.

\section{Computing M's Best Response}\label{sec:mbest}

\begin{table*}[htp!]
	\centering
	\small
	\begin{tabular}{|c|c|c|c|c|}\hline
		\multicolumn{2}{|c|}{AM}  & \multicolumn{2}{c|}{M} & \multirow{2}{*}{AM-optimal strategy} \\ \cline{1-4}
		Randomized & Sophisticated & Randomized & Sophisticated &  \\ \hline
		* & * & * & No & Always creating a sandbox \\ \hline
		No & No & * & Yes & \majority \\ \hline
		* & Yes & * & Yes & Emulate machine being defended  \\ \hline
		Yes & No & * & Yes & \existence \\ \hline
	\end{tabular}
	\caption{\label{tbl:contrib_everywhere} Guidelines when AM is installed on every real machine. $*$ indicates that added flexibility has no affect.}
\end{table*}

In this paragraph we provide a method to compute M's best response given a fixed AM strategy. This will be the basis of our QCQP framework. Before we proceed further, we rewrite the utility functions of AM and M in vector notation for convenience:
%\vspace{-1mm}
\begin{align*}
u_\m(\vpi,\vrho) = & \underbrace{(\vone - \vd)\cdot \vrho}_\text{M attacks | No AM} + \underbrace{\vd\cdot \vrho - \vd\cdot ((\Pi\cdot \vrho)\cvm \vrho)}_\text{M not caught, attacks | AM installed} \nonumber\\
= &~ \ve\cdot \vrho -  \vd\cdot ((\Pi\cdot \vrho)\cvm \vrho)\nonumber\\
u_\am(\vpi,\vrho) = &~\vd\cdot (1-\vrho) +  \vd\cdot ((\Pi\cdot \vrho)\cvm \vrho)
\end{align*}
where $\cvm$ is the component-wise multiplication operator between two vectors, and $\Pi$ is the matrix of AM's strategies, whose $\vr$-th row-vector is $\vpi^{\vr}$. 
For a fixed AM strategy $\pi$, we can solve for M's best response $\rho^*$ using the Lagrange multipliers for the optimization problem $\max_{\vrho}u_M(\vpi,\vrho)$ which are: 
$\forall \vm \in \mm, L_{\vm} = \frac{\partial u_\m(\vpi,\vrho)}{\partial \vrho_{\vm}} = \veta_{\vm} - 2\vetap_{\vm}\vpi_{\vm}\vrho_{\vm} -\allowbreak \sum_{\vec l \in \mathcal M \setminus \{\vm\}}(\vetap_{\vm}\vpi_{\vec l} + \vetap_{\vec l}\vpi_{\vm}) \vrho_l$.

Computing $\vrho^*$ involves computing the solutions to the following $3^{|\mm|} - 1$ systems of linear equations, and picking the solution that is feasible and maximizes M's utility. Each system of linear equations is indexed by an $|\mm|$-vector $\vc \in \{0,1,b\}^{|\mm|}$ (except $\vc = \vec 0$), where $b$ means ``between $0$ and $1$'', and is constructed by adding the $|\mm|$ equations as follows: \begin{enumerate*}[label=(\roman*)]\item $\vrho_{\vm}=0$ if $\vc_{\vm} = 0$, \item $\vrho_{\vm}=1$, if $\vc_{\vm} = 1$, or \item $L_{\vm}=0$ if $\vc_{\vm}=[0,1]$.\end{enumerate*}

\section{QCQP Framework for Computing AM-optimal SPNE}\label{sec:qcqp}
We now shift our focus to settings where $D>\frac12$.
Algorithm~\ref{alg:qcqp} provides a general quadratically constrained quadratic program (QCQP) formulation of the problem of computing AM-optimal SPNEs. We simultaneously solve for AM's strategy $\pi^*$ and M's strategy $\rho^*$ by setting $u_\am$ to be the objective, under the constraint that $\rho^*$ is the best response to the output strategy for AM $\pi^*$.

\begin{algorithm}[!ht]
	\begin{algorithmic}[1]
		\State {\bf Input:} A real world setting $\ve,\vd$.
		\State {\bf Output:} SPNE $\pi^*,\rho^*$.
		\State An empty set of SPNE solutions $S$.
		\For{each $\vc \in \{0,1,b\}^{|\mm|}$}
		\State Create a QCQP problem $P$ with variables $\vrho$, $\vpi$.
		\State Set the objective as $\max_{\vpi,\vrho} u_\am(\vpi,\vrho)$.
		\For{each $\vm$, $\vc_{\vm}$ add the binding constraints on $\vrho_{\vm}$}
		\State if $\vc_{\vm}=0$, add the constraint $\vrho_{\vm} = 0$.
		\State if $\vc_{\vm}=1$, add the constraint $\vrho_{\vm} = 1$.
		\State if $\vc_{\vm}=b$, add $\vrho_{\vm} \in [0,1]$, and $L_{\vm} = 0$.
		\EndFor
		\State Feasibility constraints $\vpi_{\vm}\allowbreak \in [0,1],\allowbreak \forall \vm$, and $\vpi.\vec 1 \le 1$.
		\State Compute the solution $\vpi, \vrho$.
		\State Test feasibility and constraint violations.
		\State Fix $\vpi$, and compute M's best response $\vrho'$ to $\vpi$.
		\If{$u_\m(\vpi,\vrho) \ge u_\m(\vpi,\vrho')$}
		\State Add $\vpi,\vrho$ as an SPNE to $S$.
		\EndIf
		\EndFor
		\Return Return the SPNE from $S$ with the highest AM utility.
	\end{algorithmic}
	\caption{\label{alg:qcqp} QCQP to compute AM-optimal SPNE.}
\end{algorithm}

Algorithm~\ref{alg:qcqp} involves (1) enumerating M's possible responses represented by the possible values that $\vc$ (line 7-10) can take on. For each $\vc$, the algorithm solves the QCQP problem to compute an optimal AM strategy $\vpi$ in equilibrium and the corresponding best response strategy $\vrho$ of M strategy (line 12), under the constraints $\vc$. Notice that here, $\vpi$ is an optimal AM strategy in equilibrium when M is constrained by $\vc$. In line 13, the algorithm tests constraint violations and discards the solution if some constraints are violated. This test is necessary because of limitations of current QCQP solvers. In lines 14-16, we fix the AM strategy to be $\vpi$, and compute $\vrho'$ which is M's best response to $\vpi$ without any constraints on M, using the technique described in Section~\ref{sec:mbest}, and test if $u_\m(\vpi,\vrho) \ge u_\m(\vpi,\vrho')$. If the inequality holds, the constrained best response $\vrho$ is also M's global best response, and $(\vpi, \vrho)$ is an equilibrium, and added to $S$. If the inequality does not hold, the algorithm discards the solution. Finally, the algorithm chooses the equilibrium with the highest AM utility from $S$. %\qishen{add the instruction of the algorithm }

When AM is installed only on machines of a single type, we can solve for AM's optimal strategy efficiently as follows. The algorithm follows from the observation that $u_\am$ becomes a linear function in $\vpi$, with no critical points in the interval $[0,1]$. Thus we pick the solution which maximizes AM's utility from solving the $3^{|\mm|}-1$ sets of equations for every combination of setting $\vpi_{\vm}$'s to either $0$ or $1$ and the corresponding Lagrangian first order conditions on M's strategy.

\section{Guidelines for AM}\label{sec:guidelines}
Based on Theorems~\ref{thm:installednaive},~\ref{thm:installedsophisticated} and~\ref{thm:partial}, we refer to a setting as {\em easy}, if either AM defends all machines, or it defends at most half of all machines. In these cases, we have analytical solutions to AM optimal SPNE. We refer to other settings as {\em hard}, and solve them using our QCQP-based Algorithm~\ref{alg:qcqp}.

Here, we summarize our findings and provide guidelines for sandboxing AMs. Specifically, we answer the following questions through experiments: {\em Are there natural and easy to compute strategies for AM, assuming M best responds, without compromising utility when compared to the optimal strategy?} We consider some natural strategies which are summarized in Table~\ref{tbl:natural}.\\

\noindent{\bf Guidelines for Easy Settings.}
\begin{itemize}[leftmargin=*]
\item When AM defends every real machine, Table~\ref{tbl:contrib_everywhere} summarizes the AM-optimal strategies under various combinations of restrictions on strategy spaces of AM and M.
\item When AM defends less than half of all real machines, AM should use the \undefended strategy.
\item When AM defends a single type of real machine, AM should use the algorithm in Section~\ref{sec:qcqp} to compute an optimal strategy in equilibrium.
\end{itemize}

For hard settings, the QCQP SPNE computed using Algorithm~\ref{alg:qcqp} and the \existence strategy yield AM utility close to that of the BruteForce SPNE on average. \defended also performs consistently well in practice, yielding AM utility close to both QCQP and \existence strategies on average.

\subsection{Experimental Setup} To evaluate various sandboxing strategies for hard settings, we create a dataset of 1000 settings by generating settings involving machines of two types $A, B$ i.i.d. and retaining only the hard settings. For each hard setting, we compute AM's utility in the QCQP SPNE solution, a solution computed using brute force search which we call BruteForce, as well as AM's utility when AM plays each of the natural strategies in Table~\ref{tbl:natural} and M best responds.

We use the QCQP\footnote{https://github.com/cvxgrp/qcqp}~\cite{Park2017:General} extension of the CVXPY package using the suggest and improve method. For each subproblem, we compute $10$ solutions with random initial suggestions and pick the solution with the highest AM utility. Solutions are computed using the alternating direction method of multipliers (ADMM). We discard solutions with large constraint violations (Line 13 in Algorithm~\ref{alg:qcqp}). Then we fix AM's strategy and compute M's best response using the Lagrangian first order conditions as discussed in Section~\ref{sec:mbest}. We only retain solutions for which M's utility using the QCQP solution is within $0.01$ of M's utility using the best response (Line 14-16 in Algorithm~\ref{alg:qcqp}). 

\subsection{Experimental Results}

\begin{figure}[h]
	\centering
	\begin{tabular}{c}
		\includegraphics[width=.95\linewidth]{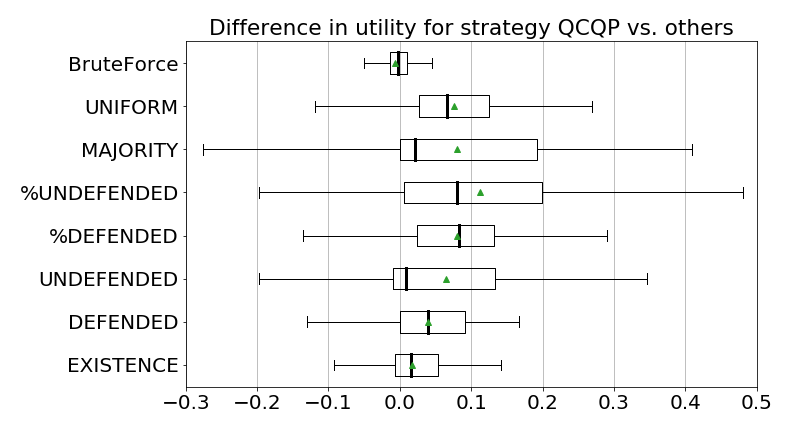}\\
		{\small (a) AM utility of QCQP solution against natural AM strategies.} \\
		\includegraphics[width=.95\linewidth]{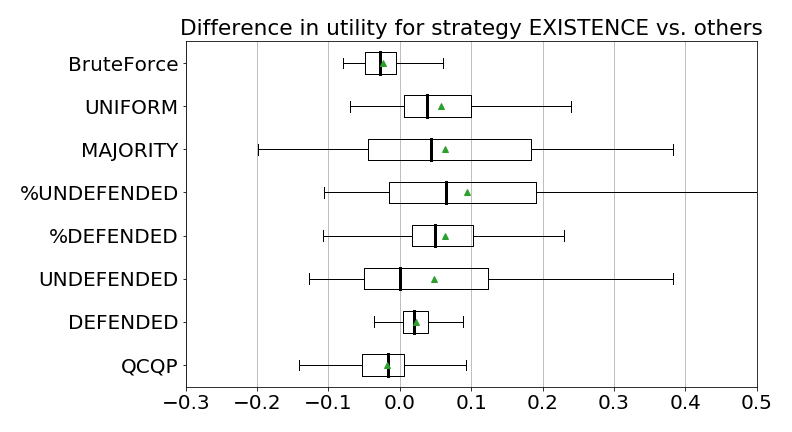}\\
		{\small (b) AM utility with real strategy against other AM strategies.}
	\end{tabular}
	\caption{\label{fig:comparison} Difference in AM utility using QCQP and \existence vs other natural strategies, and benchmarked against BruteForce. }
\end{figure}

In Figure~\ref{fig:comparison}, we report the differences between AM's utilities using the QCQP SPNE solution QCQP SPNE solution (Figure~\ref{fig:comparison} (a)), and when AM plays the \existence strategy and M best responds (Figure~\ref{fig:comparison} (b)), respectively against the AM utilities obtained when AM uses one of the other natural strategies, as well as AM utility in the SPNE computed using BruteForce as a benchmark. BruteForce was obtained by discretizing AM and M's strategy spaces in 0.01 intervals.\\

\noindent{\bf Guidelines for Hard Settings.}
\begin{itemize}[leftmargin=*]
\item \existence yields AM utility close to BruteForce strategy when M best responds.
\item On average AM utility with both \existence and QCQP strategies are close to AM utility in the BruteForce SPNE. Among all natural strategies, \existence consistently outperforms or matches other natural strategies, and even beats QCQP strategies on $31.023\%$ of simulated settings.
\item On average, both QCQP and \existence strategies are close to each  other in terms of AM utility with some exceptions where the solver failed to find a feasible solution that did not violate the first order conditions. The two methods complement each other, as shown in Figure~\ref{fig:comparison}, where we observe that they outperform each other depending on the setting.
\item If AM does not know about the distribution $\ve$ of real machines, sandboxing according to \defended is a viable alternative and works well on average.
\end{itemize}

\section{Conclusions and Future Work}
We provided the first game theoretic analysis of the sandbox game and the first theoretical results and practical algorithms for sandboxing. Specifically, our results provide concrete guidelines for deploying sandboxing AMs, allowing an AM developer to compute AM-optimal strategies under several natural restrictions on the strategy space of AM and M which correspond to different practical considerations in the deployment of AM and M. When AM either  defends every machine or defends fewer than half of all real machines, we identify natural and easy to compute strategies that are optimal for AM in equilibrium. The problem of computing an optimal AM strategy becomes harder when AM defends more than half of the real machines but not all of them. Our QCQP algorithm compute an optimal AM strategy but is computationally expensive. However, as we show empirically, the natural and easy to compute \existence strategy achieves AM utility that is close to optimal in practice.

There are several exciting avenues for future work: selecting AM strategies that are robust to M's selection of strategies in response to AM, and modeling the resource constraints as AM's ability to generate sandboxes, or M's inability to perfectly observe AM. Indeed, AM and M may often have different levels of information about a given deployment or constraints on the computational resources at their disposal. For example, commercially distributed AM may not be aware of the exact distribution of real machines where it may be deployed, and therefore be forced to commit to a naïve strategy. While our results already identify some natural and easy to compute AM-optimal solutions, fully exploring the impact of different constraints of information and computational resources on AM and M, and how to compute effective strategies for AM are an interesting question for future work.

\bibliographystyle{ACM-Reference-Format}  % do not change this line!
%\bibliography{sandbox,library}  % put name of your .bib file here
%%% -*-BibTeX-*-
%%% Do NOT edit. File created by BibTeX with style
%%% ACM-Reference-Format-Journals [18-Jan-2012].

\end{document}